\documentclass[10pt,conference,letterpaper]{IEEEtran}

\usepackage{setspace}
\usepackage[cmex10]{amsmath}
\usepackage{amssymb}
\usepackage{latexsym}
\usepackage[english]{babel}
\usepackage{amsmath}
\usepackage{verbatim}
\usepackage{tabularx}
\usepackage{graphicx}
\usepackage{algorithm}
\usepackage{algorithmic}
\usepackage{subfigure}
\usepackage{threeparttable}
\newtheorem{define}{Definition}
\newtheorem{proposition}[define]{Proposition}

 %

\usepackage{multirow}


\newenvironment{proof1}{
 \removelastskip\par\medskip
 \removelastskip\par
 \noindent{\em Proof Sketch:}
  \rm}{\penalty-20\null\hfill$\Box$\par\medbreak}

\title{Privacy-preserving Cross-domain Routing Optimization
 --A Cryptographic Approach}

\author{\IEEEauthorblockN{Qingjun Chen}
\IEEEauthorblockA{
Nanjing University\\
qingjunchen@smail.nju.edu.cn}
\and
\IEEEauthorblockN{Chen Qian}
\IEEEauthorblockA{
University of Kentucky\\
qian@cs.uky.edu}
\and
\IEEEauthorblockN{Sheng Zhong}
\IEEEauthorblockA{
Nanjing University\\
zhongsheng@nju.edu.cn }}

\begin{document}

\maketitle
\thispagestyle{plain}
\pagestyle{plain}

\begin{abstract}
Today's large-scale enterprise networks,  data center networks, and wide area networks can be decomposed into multiple administrative or geographical domains. Domains may be owned by different administrative units or organizations. Hence protecting domain information is an important concern. Existing general-purpose Secure Multi-Party Computation (SMPC) methods that preserves privacy for domains are extremely slow  for cross-domain routing problems.
In this paper we present PYCRO, a cryptographic protocol  specifically designed for  privacy-preserving cross-domain routing optimization in Software Defined Networking (SDN) environments. PYCRO provides two fundamental routing functions,  policy-compliant shortest path computing and bandwidth allocation, while ensuring strong protection for the private information of domains. We rigorously prove the privacy guarantee of our protocol. We have implemented a prototype system that runs PYCRO on servers in a campus network. Experimental results using real ISP network topologies show that PYCRO is very efficient in computation and communication costs.
\end{abstract}

\section{Introduction}
\label{introduction}

Large-scale enterprise networks,  data center networks, and wide area networks (WANs) may be decomposed  into multiple administrative or geographical  domains \cite{sepia2010, ROME, B4, MS-SDN, SDNLocality, Disco, SDNarchitecture}. Multi-domain networks are deployed to interconnect community networks, data centers, corporation  sites,  and university campuses. In a multi-domain network such as a WAN, different domains may belong to different administrative units with an organization or different organizations \cite{ROME, B4, MS-SDN, Disco, FlowBroker, SDNarchitecture}. For example, a number of organizations may own their own sub-networks, and those subnetworks are mutually interconnected to form a multi-domain network \cite{SDNarchitecture}. Hence individual domain may have security and privacy concerns regarding revealing its domain information to other domains. \emph{This paper focuses on multi-domain networks that consist of a relatively small number of domains, which may appear in current enterprise networks and WANs. We do not consider Internet-scale multi-domain networks at this stage. }

Routing optimization, such as finding policy-compliant paths that have least routing cost or satisfy bandwidth demands, plays a critical role of network management.
Recent advances of Software Defined Networking (SDN) has brought tremendous convenience to routing optimization by separating the control plane from routers and allowing a central controller to make routing decisions. Using centralized optimization, the controller can efficiently and effectively find a desired routing path for each flow and install forwarding rules on corresponding switches.\footnote{In this work we refer all network units as ``switches'' for consistency to SDN terminology. }
Although SDN simplifies routing optimization in a single domain, privacy-preserving cross-domain routing optimization is still a challenging problem.
Suppose each domain has a centralized controller. The state-of-the-art approach to route a cross-domain flow is using local optimization for intra-domain path selection and  BGP for inter-domain routing, such as the design in Google's SDN B4 \cite{B4} and DISCO \cite{Disco}. This approach protects the autonomy and privacy of domains. However, it is obvious that local optimization plus BGP may not find an network-wide optimized path and can hardly provide bandwidth guarantee. Another solution is to allow every controller to broadcast its domain information to the entire network and maintains a network-wide map, similar to a controller-level OSPF protocol. This approach causes privacy and security concerns because every domain has to expose its private information  such as network topology, link latencies, bandwidth, and routing policies. In fact, there is no practical and privacy-preserving solution to the most fundamental routing problem, i.e., computing shortest paths, for multi-domain networks.

Privacy-preserving cross-domain network problems can be modeled as secure multi-party computation (SMPC) \cite{yao1982,fairplaymp2008,
 sepia2010,
huang2011faster,
lindell2012secure,nielsen2012new}. However, general-purpose SMPC solutions such as SEPIA \cite{sepia2010} are extremely slow and may take days to complete \cite{connectivity} \cite{SMPC-BGP}. Therefore, customized algorithms are needed for the privacy-preserving cross-domain routing problems.

In this paper, we present the first work for  privacy-preserving cross-domain routing optimization that has reasonable efficiency in practical networks. We design and implement a protocol named PYCRO and its extensions to provide two fundamental routing functions, namely policy-compliant shortest path computing and bandwidth allocation, while protecting the private information of domains. PYCRO is executed on SDN controllers in a distributed manner and does not rely on any trusted third party.
PYCRO is developed based on a novel cryptographic tool called Secure-If operations.

The properties of PYCRO can be summarized as follows.

1) PYCRO can compute policy-compliant cross-domain shortest paths and allocate bandwidth for flows while protecting private information of domains. The privacy guarantee of PYCRO is cryptographically strong.(Please see Section~\ref{sec:analysis} for formal analysis of privacy.)

2) PYCRO also preserves the autonomy and local policies of domains. A domain can independently  determine whether and how to forward different flows and these preferences are unknown to other domains.

3) PYCRO is efficient in both computation and communication costs.

PYCRO is the first work of privacy-preserving cross-domain routing optimization in SDN environments. We have implemented a prototype system that runs PYCRO on machines in a campus network. Experimental results using real ISP network topologies show that PYCRO has reasonably good efficiency. It spends $<30$ seconds and $< 700$ KB messages in computing a shortest path tree for  networks consisting of thousands of switches and links.

The rest of this paper is organized as follows. We review the related work in Section~\ref{sec:relatedwork}. In Section~\ref{problem formulation}, we introduce the problem overview and background. We presenting the PYCRO protocol in Section~\ref{sec:design}, and then introduce some optimization techniques in Section \ref{sec:opti}. We design the bandwidth allocation protocol with our PYCRO protocol in Section~\ref{sec:ba}. In Section~\ref{sec:analysis}, we justify the privacy-preserving property of  PYCRO. We evaluate the performance of our protocol in Section~\ref{sec:evaluation}.  Finally, we conclude our paper in Section~\ref{sec:conclusion}.

\section{Related Work}
\label{sec:relatedwork}
Privacy-preserving cross-domain routing can be modeled as a secure multi-party computation (SMPC) problem.   Yao's seminal work~\cite{yao1982} introduces the first algorithm, called Yao's garbled circuits, to allow two parties to compute an arbitrary function with their inputs without revealing private information. Since then, many studies about SMPC have been conducted
\cite{fairplaymp2008,
 sepia2010,
huang2011faster,
lindell2012secure,nielsen2012new}. In~\cite{fairplay2004}, a secure two-party computation system called Fairplay is introduced and the system implements generic secure function evaluation. FairplayMP proposed in~\cite{fairplaymp2008} supplements the Fairplay system. FairplayMP is a generic system for secure multi-party computation while Fairplay only supports secure two-party computation. SEPIA \cite{sepia2010} is a recently proposed SMPC system for general inter-domain network applications.
A common limitation of these SMPC  solutions is that the computation time can be way too long for practical applications. For example, \cite{connectivity} shows that it takes thousands of days to track cross-domain connectivity of a few domains using SEPIA \cite{sepia2010}. An SMPC-based routing algorithm proposed to replace BGP also experiences long execution time \cite{SMPC-BGP} which makes the existing SMPC methods impractical for inter-domain routing.

Recently researchers have proposed custom privacy-preserving algorithms for different network applications. Chen \emph{et al. } \cite{cross-reachability} use Bloom filters to combine access control lists of multiple domains and determine network reachability in a privacy-preserving manner. Djatmiko \emph{et al. } \cite{connectivity} propose to apply counting Bloom filters for privacy-preserving multi-domain connectivity tracking. STRIP \cite{PPvector} is a privacy-preserving inter-domain routing protocol to replace BGP and achieve fast convergence. To our knowledge, no existing work in this category studies the privacy-preserving cross-domain routing optimization problem.

\section{Problem Overview and Background}
\label{problem formulation}

In the section, we formalize the problem in this paper and then introduce a novel cryptographic tool we will use to solve the problem.

\subsection{Problem Formulation}
We formalize the problem to be solve in this paper as follows.

Consider a large network that consists of $N$ domains: $D_1$, $D_2$, \ldots, $D_N$, where each domain
$D_i$ has a \emph{domain controller} $C_i$ that makes routing decision and updates the forwarding tables of switches in the domain.
A domain controller can access any information of its domain, including the network topology, access control policies, link bandwidth, and authenticated hosts. A domain controller can add, delete, and update  forwarding entries of switches in its domain. It communicates with controllers in other domains via pre-established secure channels.

For any two switches
$v, v' \in D_i$ ($v \neq v'$), we use $v \sim v'$ to denote that there is a link between $v$ and $v'$ and we denote its link cost by $c(vv')$.
Clearly, each $C_i$ should know the topology of $D_i$, and should also know all the link costs within this domain: $\{c(vv')| v, v' \in D_i, v \neq v'\}$.
We assume that the intra-domain topology and the intra-domain link costs are all private information of $C_i$. That is, any other domain controller should not know anything about this topology or these link costs.
We assume different domains are managed by different parties, such as ISPs, organizations, or departments of a corporation. Parties do not share domain information. If a party owns multiple physical domains, all these domains can be considered a single logical domain in this problem.

There are some inter-domain links that connect switches from different domains. We assume that information about an inter-domain link is available of the two end domains, and domains can share it with other domains. That is, for any inter-domain link $vv'$ (where
$v \in D_i$, $v' \in D_j$ and $D_i \neq D_j$), all domain controllers could know the two endpoints $v$ and
$v'$, and also $D_i$ and $D_j$---the domains they belong to. A switch that is connected to switches in other domains is called a \emph{gateway switch}. We assume gateway switches are publicly known.

Suppose that there are a source switch $v_s$, which belongs to a domain $D_s$, and a destination switch $v_t$, which belongs to another domain $D_t$. Our objective is
to design a private-preserving optimized routing solution. Specifically,  we need
to design a protocol that
allows each domain controller $C_i$ to find the forwarding table $T(v)$ for all $v \in D_i$,
where each entry $T(v)[v_s, v_t]$
is the next-hop switch of $v$ on the optimal routing path from the source  $v_s$ to
destination $v_t$.

In this work, PYCRO focuses on two major routing optimization problems.

 1) \emph{Policy-compliant shortest path routing}. Each link has an associated routing cost (also known as link weight), representing a performance metric such as hop count, latency, or traffic load \cite{linkstateTE}.
    The routing object is to find a  path from the source to the destination that has the minimum sum of link cost without violating policies of domains.

2) \emph{Bandwidth allocation.} Bandwidth allocation has been applied to practical traffic engineering solutions such as B4 \cite{B4}. Each flow has a bandwidth demand and link bandwidth is allocated to different flows. When flows are competing for bandwidth, a single flow may need multiple paths to satisfy its bandwidth demand. The routing object is to find one or more paths for a flow such that the flow bandwidth demand can be satisfied. At this stage, we do not consider fairness among flows \cite{B4}.


\textbf{Security and Privacy Requirements.}
Due to security concerns, a switch only allows its domain controller to install, delete, or update forwarding table entries.
Domains may not wish to reveal their information including network topology, link bandwidth, and routing policies. In addition, a domain should have routing autonomy to determine whether and how to forward a given flow. This preference should also be made confidential to other domains.



\subsection{Cryptographic Tool}
Here we introduce the cryptographic tool we will use in this work, namely the Secure-If operation.

\textbf{Secure-If operation. }
Our protocol depends on a cryptographic technique developed by us, which we call \emph{the Secure-If operation}.
This operation allows the protocol to choose between two options $Y$ and $Z$, based on whether a particular condition $X$ is satisfied. Denote by $SecIf(X,Y,Z)$ the Secure-If operation, and then we have
\begin{equation}
SecIf[X,Y,Z] = \begin{cases}
Y, & X~is~satisfied; \\
Z, & otherwise.
\end{cases}
\end{equation}
Note that this operation is privacy preserving. It is infeasible for anybody to decide whether the condition is satisfied or not, i.e., which of the two options is actually chosen.
For example, suppose
that $X$, $Y$, $Z$ are ciphertexts; consider a condition that ``$X$ is an encrypted $1$''.
This operation can return a rerandomization of $Y$ when
the plaintext of $X$ is indeed $1$, and return a rerandomization of $Z$ otherwise.
However, nobody can learn whether the returned value is a rerandomization of $Y$ or a rerandomization of $Z$ unless the result is decrypted.
In general, the privacy guarantee is that no knowledge about any plaintext(s) involved is leaked
to any party.

The involved conditions may be complicated and thus this technique itself can
depend on other cryptographic building blocks. For instance, we may need to
use the building block of \emph{partial decryption}. Suppose that the private key
for a ciphertext
is shared among a number of parties using a secret sharing scheme \cite{secretsharing}. Partial
decryption allows a party with a share of the private key to partially decrypt a
ciphertext. The partially decrypted ciphertext does not leak any knowledge about
the plaintext. However, when a threshold number of parties apply
partial decryption one by one, the plaintext will finally be revealed.
Detailed implementation of Secure-If operations are custom-built and depend on different algorithms. 

Also notice that we will use a few variants of this technique in this paper. Each of these variants is constructed in a distinct way. Please see Section~\ref{sec:if} for the detailed constructions.

\section{Design of the PYCRO Protocol}
\label{sec:design}
In this section, we present the PYCRO protocol with three steps:equivalent cost graph construction, privacy-preserving shortest path tree protocol and path establishment. In the PYCRO protocol,
we need two homomorphic encryption systems $E()$ and $E'()$, both of which must be
\emph{semantically secure}.
The difference between $E()$ and $E'()$ is that $E()$ must be additively homomorphic, while
$E'()$ must be multiplicative homomorphic.
Specifically, for two messages $x$ and $y$,
\[E(x)+E(y)=E(x+y)\]
\[E'(x)\cdot E'(y)=E'(xy)\]
All $E()$ and $E'()$ encryption operations in this paper use a public key whose corresponding
private key is shared among the domain
controllers using $(N,2)$-secret sharing.
 There exist cryptosystems \cite{gentry2009fully,van2010fully,smart2010fully} that are both additively and multiplicatively homomorphic. However, we do not use them due to efficiency considerations.
We denote by $D()$ and $D'()$ the corresponding decryption operations,
respectively.
In addition, we allow both of them supports \emph{re-randomization operations}, and the rerandomization operation is denoted by $R()$ and $R'()$.
%
As mentioned earlier, another main cryptographic tool we use in the PYCRO protocol is the Secure-If operation.

\begin{figure*}[t]
\centering
\begin{tabular}{p{160pt}p{160pt}p{160pt}}
\subfigure[Equivalent cost graph of four domains: dashed lines are intra-domain links and solid lines are inter-domain links.]
{
\includegraphics[width=5.3cm]{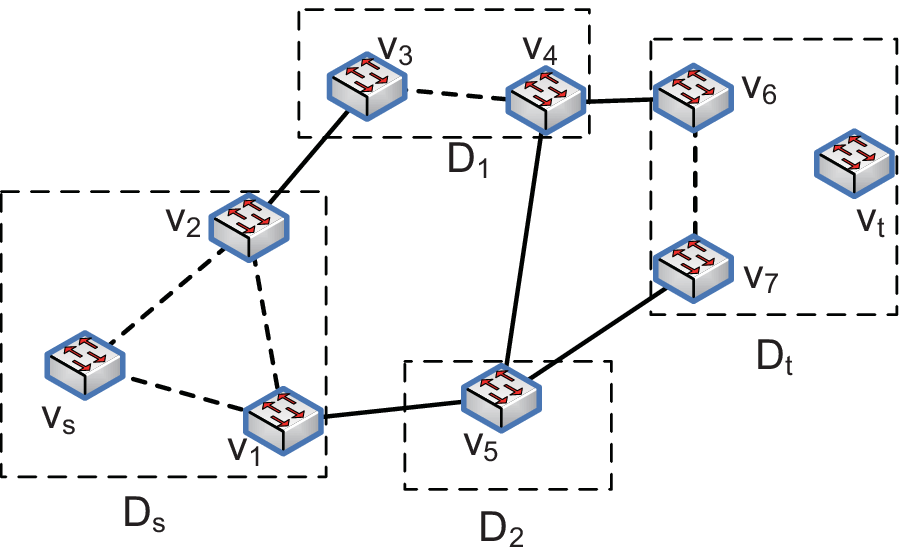}
\label{fig:ECG}
}&
\subfigure[An iteration of PSPT construction]
{
\includegraphics[width=5.3cm]{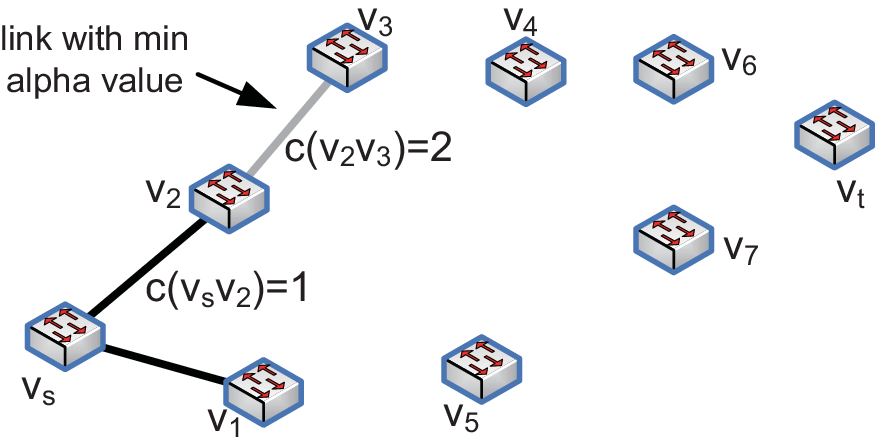}
\label{fig:ADD}
}&
\subfigure[PSPT and path establishment]
{
\includegraphics[width=5.3cm]{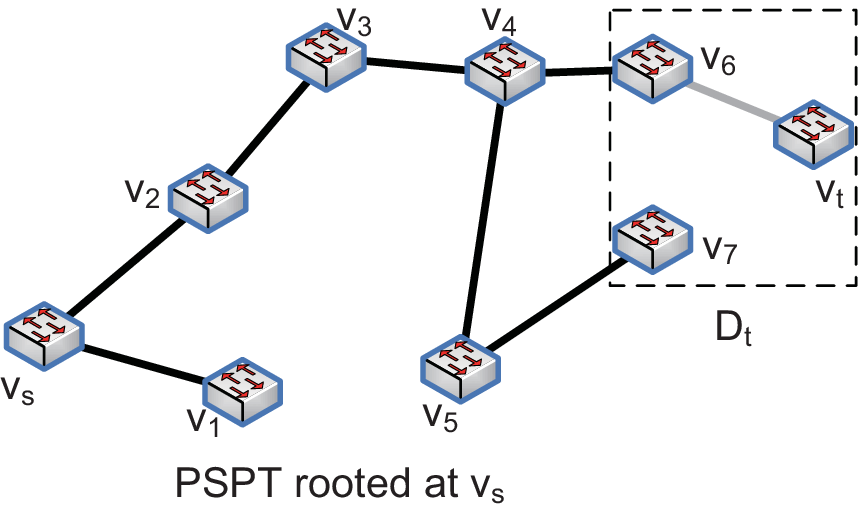}
\label{fig:Tree}
}
 \end{tabular}
\caption{An illustration of the PYCRO protocol.}
\vspace{-2ex}
\end{figure*}

\subsection{Equivalent Cost Graph Construction}

In this subsetion, we show how to construct the equivalent cost graph.
To construct equivalent cost graph, we first show the nodes in it and then the links in it.

As for nodes, we define a switch as a \emph{significant node} if it is the source switch or a gateway switch 
and the nodes of the equivalent cost graph are the significant nodes in the entire network.
We denote by $S_i$ the significant node set of domain $D_i$ and we also denote by $S$ the significant node set of all domains.

As for links, for any two significant nodes $v$ and $v'$ ($v \neq v'$), we distinguish two cases:

\emph{Case 1:} If $v,v' \in S_i$, then link $v \sim v'$ is in the equivalent
cost graph.
In this case, the link is called \emph{intra-domain link} since two nodes are in the same domain. Note that a intra-domain link does not necessarily correspond to a physical link, and could be a multi-hop path between two switches. The path from $v$ to $v'$ is selected by $D_i$ in the best effort based on $D_i$'s local policies and is not necessarily the shortest path. If a domain does not wish to forward $f$, it sets the path length as infinity or the pre-defined path length upper limit. We use $d(vv')$ to denote the path length assigned by $D_i$.

\emph{Case 2:} If $v \in S_i \wedge v' \in S_j \wedge S_i \neq S_j \wedge v \sim v'$
then link $v \sim v'$ is in the equivalent cost graph.
In this case, the link is called \emph{inter-domain link} since two nodes are in different domains. We use $c(vv')$ to denote the length of link $v \sim v'$.

As an example, Figure \ref{fig:ECG} shows the equivalent cost graph of a network consisting of four domains, in the view of the controller $C_s$ of the source domain $D_s$. The nodes of the graph are the source switch $v_s$ and all gateway switches $v_{1-7}$.

Clearly, $C_s$, the controller of the source domain, knows the connectivity information of the equivalent cost graph.
Furthermore, for links in Case 2 above, $C_s$ also knows the link costs in the equivalent cost graph.
For links in Case 1 above that are not in $D_s$, $C_s$ does
not know the link costs in the equivalent cost graph, which are private information of different domains.


\subsection{Privacy-preserving Shortest Path Tree Protocol}
\label{sec:tree}
This subsection describes how the source controller computes a Privacy-preserving Shortest Path Tree (PSPT) on the equivalent cost graph rooted at $v_s$ 
while providing strong protection for the private information of other domains.
We use $c_{max}$ to denote the maximum
link cost and assume the length of cryptographic keys in use is much greater than the length of $c_{max}$.

Each domain controller $C_i$, except the source domain controller $ C_s$, encrypts all its link costs in Case 1 of the equivalent cost graph, and sends them to $C_s$. Specifically, for any two switches $v$ and $v'$ in  $D_i$ ($D_i \neq D_s$), $C_i$ computes $e(vv')=E(d(vv'))$ and sends it to $C_s$. 
The source domain controller $C_s$ needs to encrypts all its link costs in Case 1 of the equivalent cost graph.
$C_s$ is also responsible for encrypting the link costs in Case 2 of the equivalent cost graph. Specifically, for
any $v$ in $D_i$ and  $v'$ in  $D_j$,  if there is an inter-domain link between these two nodes, then $C_i$ computes $e(vv')=E(c(vv'))$. For any $v, v' \in D_s$ ($v \neq v'$),  if there is an intra-domain link between these two nodes, then $C_i$ computes $e(vv')=E(c(vv'))$.

For each node $v$ in the equivalent cost graph, except the source node
itself, $C_s$ computes three indicators:
$f(v)=E'(2)$, $g(v)=E(0)$, and $h(v)=E'(\phi)$. Here $f(v)$ is an encrypted
indicator for node $v$, indicating whether it has been added to the shorted path
tree. We use an encrypted $2$ to represent ``No'', and an encrypted $2^{-1}$ to
represent ``Yes''.  The plaintext of $g(v)$ will be used for the length of the shortest
path from the source node to $v$, once $v$ is added to the shortest path tree.
The plaintext of $h(v)$ will be used to store the information of the parent node
of $v$ in the shortest path tree, once $v$ is added to the shortest path tree. All these indicators are essential in the computation of the shortest path tree.

For the source node, $C_s$ computes the three indicators: $f(v)=E'(2^{-1})$,
$g(v)=E(0)$, and $h(v)=E'(\phi)$, because it is the root of the tree.
Then the source controller repeats the two steps below for $|S|-1$ iterations, where $S$ is the set of nodes in the equivalent cost graph. At each iteration, a node with the minimum distance to the root among the remaining nodes is added to the tree.

Step 1. For each link $vv'$  in the equivalent cost graph,  $C_s$ uses a Secure-If
operation (denoted as $SecIf_0$) to compute $\alpha(vv')$.
The condition here is that the plaintext of $f(v)$ is equal to the plaintext of $f(v')$.
If this condition is satisfied,
$\alpha(vv')= E(c_{max}|S|+1)$; otherwise,
$\alpha(vv')=R(g(v)+g(v')+e(vv')))$.  If the condition is satisfied, it means either $v'$ and $v$ are both in the tree or neither in the tree. We just let $\alpha(vv')$ be a maximum value and do not consider it. If the condition is not satisfied, one of $v'$ and $v$ is in the tree and the other is not.  Then the plaintext of $\alpha(vv')$ is the distance from the node not in the tree to the root.

Step 2. For each link $vv'$  in the equivalent cost graph,  $C_s$ uses a
Secure-If operation (denoted as $SecIf_1$) to re-compute $f(v)$, $f(v')$, $g(v)$, $g(v')$, $h(v)$, $h(v')$.
The condition is that the plaintext of $\alpha(vv')$ is the smallest among the
$\alpha$ values of all links in the equivalent cost graph. The node not in the tree that corresponds to the smallest $\alpha$ should be added to the tree and its three indicators should be updated.

If the condition in Step 2 is satisfied, then we use another Secure-If operation (denoted as $SecIf_2$)
to decide how to update the indicators. The condition of this new Secure-If
operation is that the plaintext of $f(v)$ is equal to $2$, i.e., whether the node $v$ is not in the tree.
\begin{itemize}
\item When the condition is satisfied ($v$ is not in the tree),  $f(v)=E'(2^{-1})$, $g(v)=R(\alpha(vv'))$,
$h(v)=E'(v')$. The indicators of $v'$ are re-randomized.
\item Otherwise, $v'$ is not in the tree, hence $f(v')=E'(2^{-1})$, $g(v')=R(\alpha(vv'))$,
$h(v')=E'(v)$. The indicators of $v$ are re-randomized.
\end{itemize}

If the condition in Step 2 is not satisfied, all indicators $f(v)$, $f(v')$, $g(v)$, $g(v')$, $h(v)$, $h(v')$
are just re-randomized based on the original values.

We show an example of the above iteration in Figure \ref{fig:ADD}. $v_s$, $v_1$, and $v_2$ are already in the tree. Since $v_3$ is not in the tree, we compute $\alpha(v_2 v_3)=R(g(v_2)+g(v_3)+e(v_2 v_3)))=R(E(1)+E(0)+E(2))=R(E(3))$. Suppose the plaintext of $\alpha(v_2 v_3)$, i.e., 3, is the smallest $\alpha$ value. Then $v_3$ should be added to the tree. The indicators of $v_3$ are updated as follows: $f(v_3)=E'(2^{-1})$, $g(v_3)=R(E(3))$, $h(v)=E'(v_2)$. The indicators of $v_2$ are all re-randomized.

The detailed algorithm specification of the PSPT construction  protocol is not shown due to space limit. 
Once the algorithm is completed, for each $v$ in the equivalent cost graph, $C_s$ actually obtains the ciphertexts of $g(v)$, the shortest path length from $v_s$ to $v$, and $h(v)$, the parent of $v$ on the PSPT. Figure \ref{fig:Tree} shows the constructed PSPT of the network.
With all the $g(v)$ and $h(v)$, we can construct the path from $v_s$ to $v_t$ using the method proposed in the next section (Section~\ref{sec:path}). Note that we use three types of Secure-If operations ($SecIf_0$, $SecIf_1$ and $SecIf_2$). We will describe how they are implemented in detail in Section \ref{sec:if}.

\subsection{Path Establishment}
\label{sec:path}

After running the PSPT construction protocol, each domain controller knows all its significant nodes' values of $g$ and $h$ from $C_s$. Using the values, we can construct the path $P$ from $v_t$ back to $v_s$ step by step (e.g., first $v_t$, and then the parent of $v_t$, and then the parent of the parent of $v_t$, until the source $v_s$).


After finishing computing the shortest path tree, $C_s$ then partially decrypts each $g(v)$ and each $h(v)$, and
sends the partial decrypted ciphertexts to the domain controller of
node $v$. The domain controller of $v$ also applies partial decryption,
and thus obtains the plaintexts of $g(v)$ and $h(v)$, i.e., $dg(v)$ and $dh(v)$. Since $E()$ uses $(N,2)$-secret sharing, the encrypted indicators can be decrypted by partial decryption of two domains.

For any destination $v_t$, the shortest path and corresponding forwarding table entries are constructed using 
Algorithm~\ref{alg:path_establishment} 
with all plaintext indicators $dg()$ and $dh()$.

If $v_t$ is not a significant node, the domain controller $C_t$ of $v_t$ compares all the significant
nodes in its domain, for the sums of their distances from $v_s$ and to $v_t$. Suppose
the significant node with the smallest distance sum is $v$. Then the intra-domain
path from $v$ to $v_t$ is chosen as part of the shortest path from $v_s$ to $v_t$, and
the forwarding table entries for destination $v_t$ in this part of path are computed and installed accordingly.
The forwarding table entries in the other parts of the path are computed in a way similar to
$v_t$ being a significant node presented below.

If $v_t$ is a significant node,  the domain controller of $v_t$ decides what to do based on
the type of link between $v_t$'s parent $dh(v_t)$ on the shortest path tree  and $v_t$ in the equivalent cost graph.
\begin{itemize}
\item
If the link represents an intra-domain path, i.e., $dh(v_t)$ is another significant node in the destination domain,
the intra-domain path between $dh(v_t)$ and $v_t$  is picked as part of the shortest path from
$v_s$ to $v_t$.  The forwarding table entries for destination $v_t$ in the destination domain are installed by $C_t$ accordingly.
\item If the link is an inter-domain link, the link is added directly as part of the shortest path from
$v_s$ to $v_t$.
 $C_t$ then sends a message to the domain controller of $dh(v_t)$ and asks it to install a corresponding forwarding table entry at switch $dh(v_t)$.
\end{itemize}
Next, the domain controller of the predecessor of the destination domain on the selected shortest path computes the forwarding table entries similarly. This
process is repeated until the source switch is reached and all forwarding table entries for destination
$v_t$ have been computed.

For the example of Figure \ref{fig:Tree}, the destination controller $C_t$ selects $v_6$ as part of the optimal path from $v_s$ to $v_t$. It then installs forwarding entries on switches between $v_t$ and $v_6$ and also notifies $C_1$ to install a forwarding table entry at $v_4$, specifying that packets from $v_s$ to $v_t$ should be forwarded to $v_6$ by $v_4$. The routing path can be established by repeating this process.

Algorithm~\ref{alg:path_establishment} presents the pseudocode of the path establishment protocol in Section \ref{sec:path}.

\begin{algorithm}[!ht]
\caption{Path Establishment Protocol} \label{alg:path_establishment}
\begin{algorithmic}[1]
\REQUIRE All significant nodes' $g$ and $h$;\\
		 Source node $v_s$ and destination node $v_t$;
\ENSURE The shortest path $P$ from $v_s$ to $v_t$
\STATE $C_s$ computes partial decryption $PD(g(v))$ and $PD'(h(v))$, and then  sends them to $C$, the controller of $v$.
\STATE $C$ partially decrypts $PD(g(v))$ and $PD'(h(v))$ and gets the plaintext of $g(v)$ and $h(v)$: $dg(v)$ and $dh(v)$.
\STATE $v_t=v_t$
\IF{$v_t \not \in S$}
	\STATE Let $S_t$ be the significant node set of $D_t$
	\STATE $v_{min}=-1,d_{min}=\infty$
	\FORALL{$v \in S_t$}
		\IF{$dg(v) + d(vv_t) < d_{min}$}
			\STATE $d_{min}=dg(v) + d(vv_t)$, $v_{min} = v$
		\ENDIF
	\ENDFOR
	\STATE Add the intra-domain path from $v_{min}$ to $v_t$ to $P$.
	\STATE Let $v_t=v_{min}$
\ENDIF
\STATE Now we construct the path from $v_s$ to $v_t$.
\WHILE{$v_t \neq v_s$}
	\IF{$dh(v_t) \in S_t$
	\COMMENT{$dh(v_t) \sim v_t$ is an intra-domain link}
	}
		\STATE Add the intra-domain path from $dh(v_t)$ to $v_t$ to $P$.
	\ENDIF
	\IF{$dh(v_t) \not \in S_t$
	\COMMENT{$dh(v_t) \sim v_t$ is an inter-domain link}
	}
		\STATE Add $dh(v_t) \sim v_t$ to $P$.
	\ENDIF
	\STATE Let $v_t=dh(v_t)$
\ENDWHILE
\end{algorithmic}
\end{algorithm}

\subsection{Implementation of Secure-If Operations}
\label{sec:if}

In this section, we will introduce the implementation of the three Secure-If operations used in the PSPT construction protocol.

First, we present a sketch of the Secure-If operation (See Algorithm~\ref{alg:secifsketch}).
Each Secure-If operation needs to construct three parameters $(t_0,t_1,t_2)$ and a condition satisfied value $x$ as input. $t_0$ is a condition while $t_1$ and $t_2$ are  two options. The output of Secure-If is $t_1$ when condition is satisfied ($t_0=x$); otherwise, the output is $t_2$. Such operation is achieved by an interactive process between two controllers, say $C_s$ and $C_i$. $C_s$ first applies partial decryption to $t_0$ and sends the result $PD(t_0)$, together with $t_1$ and $t_2$, to any other domain controller $C_i$. Then $C_i$ can fully decrypt $t_0$ and get the plaintext $dt_0$ as the threshold of secret sharing is $2$. $C_i$ verifies whether $dt_0$ is equal to $x$ and replies one of $t_1$ and $t_2$ (with re-randomization) to $C_s$.
With the Secure-If operation sketch, we need to show the construction of $(t_0,t_1,t_2)$ and $x$ when we introduce a Secure-If operation.

The PSPT construction uses three Secure-If operations ($SecIf_0$, $SecIf_1$, and $SecIf_2$).
As the Secure-If operation $SecIf_2$ is used in $SecIf_1$, our decryption is in the order of $SecIf_0$, $SecIf_2$, and $SecIf_1$.

\begin{algorithm}[t]
\caption{Secure-If Operation Sketch} \label{alg:secifsketch}
\begin{algorithmic}[1]
\REQUIRE ~\\$x$: value when condition is satisfied;\\
		 $(t_0,t_1,t_2)$: three parameters.\\

\STATE $C_s$ randomly choose a domain controller $C_i$.
\STATE $C_s$ computes $PD(t_0)$ and sends $(PD(t_0),t_1,t_2)$ to $C_i$. \COMMENT{$PD()$ is partial decryption operation}
\STATE Upon receiving $\left(PD(t_0),t_1,t_2\right)$, $C_i$ do partial decryption on $PD(t_0)$ and gets the plaintext $dt_0$ of $t_0$.
\STATE $C_i$ sends
		$\begin{cases}
			R(t_1) & if~dt_0==x \\
			R(t_2) & otherwise
		\end{cases}$
		back to $C_s$.
\STATE $C_s$ gets the result.
\end{algorithmic}
\end{algorithm}

\textbf{Construction of $SecIf_0$}

$x$ in $SecIf_0$ is $1$ and $(t_0, t_1, t_2)$ are constructed as follow.

With probability $\frac{1}{2}$, $C_s$ computes
$t_0=(\frac{f(v)}{f(v')})^r$, where $r$ is a randomly picked exponent
\footnote{Assume the plaintext space and the ciphertext space are both the same cyclic group. The value of $r$ needs to be picked uniformly at random from between $0$ and the order of the group minus $1$, including the two endpoints.}; $t_1= E(c_{max}|S|+1)$; $t_2=R(g(v)+g(v')+e(vv'))$. In this case if $f(v)$ is equal to $f(v')$, $t_0 = 1 = x$, hence the function of  $SecIf_0$ can be achieved.

With the remaining probability $\frac{1}{2}$, $C_s$ computes
$t_0=(\frac{1}{f(v)f(v')})^r$, where $r$ is a randomly picked exponent;
$t_1=R(g(v)+g(v')+e(vv'))$; $t_2= E(c_{max}|S|+1)$. In this case if $f(v)$ is not equal to $f(v')$, $t_0 = \frac{1}{2*1/2} = 1= x$, hence the function of  $SecIf_0$ can also be achieved.

The reason for that we use an uncertain calculation is to protect privacy. If we only apply the first case, any attacker that decrypts $t_0$ and finds $t_0=x$ can determine that $f(v) = f(v')$. However, in the current implementation, even if an attacker knows $t_0=x$, it cannot guess whether $f(v) = f(v')$ as $f(v) = f(v')$ and $f(v) \neq f(v')$ have equal probability.

%
%

\textbf{Construction of $SecIf_2$}


$x$ in $SecIf_2$ is $2$ and $(t_0, t_1, t_2)$ are constructed as follow.

With probability $\frac{1}{2}$, $C_s$ computes
$t_0=R(f(v))$. Let
$t_1, t_2$ be
$E'(2^{-1}), R'(f(v))$ for the $SecIf_2$ of $f(v)$;
$R(\alpha(vv')), R(g(v))$ for $g(v)$;
$E'(v'), R'(h(v))$ for $h(v)$;
$R'(f(v')), E'(2^{-1})$ for $f(v')$;
$R(g(v')), R(\alpha(vv'))$ for $g(v')$;
$R'(h(v')), E'(v)$ for $h(v')$.

With the remaining probability $\frac{1}{2}$,
$t_0=R(\frac{1}{f(v)})$ and the above values of  $t_1$ and $t_2$ are swapped, i.e., $t_1, t_2$ be
$R'(f(v)), E'(2^{-1})$ for $f(v)$ and so on;

\textbf{Construction of ${SecIf}_1$}

Here we show the construction of $x$ and $(t_0,t_1,t_2)$ in $SecIf_1$. We first introduce a comparison protocol called $osc$ which is necessary in $SecIf_1$.

The comparison protocol is designed by us based on the secure comparison protocol proposed in \cite{nishide2007multiparty}.
The protocol in \cite{nishide2007multiparty} takes
two ciphertexts of $E()$ as input, and outputs another ciphertext
of $E()$. The output is $E(1)$ if the first input's plaintext is greater than
or equal to the second input's; otherwise, the output is $E(-1)$.
Based on this comparison protocol, we design a new comparison protocol which can distinguish not only two edges with different $\alpha$ but also two edges with the same $\alpha$ by comparing their indexes.
Denote the original comparison operation by $sc()$.
Assume that the two edges' $\alpha$ values are $a$ and $b$ and their indexes are $a_{idx}$ and $b_{idx}$.

The protocol we designed, $osc(a,a_{idx},b,b_{idx})$, is actually a Secure-If operation. Its $x$ is $1$ and $(t_0,t_1,t_2)$ are constructed as the following paragraph. With $x$, $(t_0,t_1,t_2)$ and Secure-If operation sketch, we get the new protocol $osc$.

First we compute $\theta=sc(a,b) + sc(b,a) - E(1)$.
If $a\neq b$, $\theta$ is $E(-1)$; Otherwise $\theta$ is $E(1)$.
With probability $\frac{1}{2}$, $C_s$ computes
$t_0=\theta$; $t_1$ is $E(1)$ if $a_{idx} < b_{idx}$;
Otherwise $t_1$ is $E(-1)$; $t2=sc(b,a)$. With probability $\frac{1}{2}$, $C_s$ computes
$t_0=-\theta$; $t1=sc(b,a)$; $t_2$ is $E(1)$ if $a_{idx} < b_{idx}$;
Otherwise $t_2$ is $E(-1)$.

With the secure comparison, $C_s$ can compare each $\alpha$ value (except $\alpha(vv')$ itself) with $\alpha(vv')$.
Denote by $\beta_i$ the output of the protocol. Suppose that there
are $\zeta$ such outputs in total.
$C_s$ computes $\gamma=\sum_{i} \beta_i$, and uses the secure comparison
protocol again, to compare $\gamma$ with $E(\zeta)$.
Let $\epsilon$ be the output. With $\epsilon$, we can easily construct $t_0,t_1,t_2$ of $SecIf_1$.

The construction of $SecIf_1$ is shown in Algorithm~\ref{alg:secif1con}.

\begin{algorithm}[!ht]
\caption{Construction of $SecIf_1$} \label{alg:secif1con}
\begin{algorithmic}[1]
\ENSURE $x$ and $(t_0,t_1,t_2)$.
\STATE Denote by $m$ the total link number in the equivalent cost graph.\COMMENT{the link number is equal to the number of all $\alpha$ values.}
\STATE Assume the index of $\alpha(vv')$ is $k$.
\STATE $\gamma=E(0)$,$\zeta=m-1$
\FOR {$i=1$ \TO $m$}
	\STATE Assume the $i$th link is $v_1 \sim v_2$.
	\IF{$k \neq i$\COMMENT{$v1 \sim v_2 \neq v \sim v'$}}
		\STATE $\gamma = \gamma + osc(\alpha(vv'),k,\alpha(v_1v_2),i)$.
	\ENDIF
\ENDFOR
\STATE $\epsilon = osc(\gamma, E(\zeta))$.
\STATE Let $t_a$ be the result of $SecIf_2$.
\STATE Let $t_b$ be :\\ \hspace{0.2in}
	$\begin{array}{ccc}
	t_b=<&R'(f(v)),R(g(v)),R'(h(v)),&\\
	&R'(f(v')),R(g(v')),R'(h(v'))&>
	\end{array}$
\STATE $C_s$ computes: \\ \hspace{0.2in}
 $\begin{cases}
		t_0=\epsilon, t_1=t_a, t_2=t_b; with~probability~\frac{1}{2}, \\
		t_0=-\epsilon, t_1=t_b, t_2=t_a; with~probability~\frac{1}{2}
 \end{cases}$
\STATE $x=1$.
\STATE $C_s$ gets $x$ and $(t_0,t_1,t_2)$.
\end{algorithmic}
\end{algorithm}

%


\vspace{-1ex}
\section{Protocol Optimization}
\label{sec:opti}

In this section, we introduce two optimization methods of PYCRO. The first method reduces the number of shortest path tree computing for different flows. The second method reduces the computing time of each shortest path tree, called PYCRO with Candidate Recommendation (PYCRO-CR). Combining these two, the efficiency of PYCRO can be significantly improved.

\subsection{Shared Shortest Path Tree}
\label{sec:shared}
For all flows transmitted from a domain $D_s$, it is highly possible that another domain will treat these flows or a large subset of these flows using the same access and routing policy. We define an \emph{equal-flow group} $G$ as a group of flows from the same domain such that for all flows in  $G$, any other domain $D$ will treat them using the same  access and routing policy and hence provide the same paths between any two gateways of $D$. Therefore all flows in $G$ can use a number of shared shortest path trees.

A source domain with $k$ gateway switches maintains $k$ shared shortest path trees for each equal-flow group. Each of the trees is rooted at a gateway switch. To compute each shared shortest path tree, the nodes of the equivalent cost graph include gateway switches (significant nodes) of all domains.
Correspondingly, when constructing links in the equivalent cost graph, for any two significant
nodes $v$ and $v'$ ($v \neq v'$), we distinguish two cases:
\begin{itemize}
\item Case 1: $v$ and $v'$ belong to the same domain,
then there is a link in the equivalent
cost graph between those two nodes, and the cost of this link is $d(vv')$, which is only known to the domain of  $v$ and $v'$.
\item Case 2: $v$ and $v'$ belong to different domains,
If there is an inter-domain link
between $v$ and $v'$,
then there is a link in the equivalent
cost graph between those two nodes, and the cost of this link is $c(vv')$. If there is no
such inter-domain link
then there is no
link in the equivalent cost graph between these
two nodes.
\end{itemize}
Then the algorithm in Section \ref{sec:tree} can be run to build each shared shortest path tree.

When the source controller receives a flow query from the source $v_s$ to destination $v_t$. For each  gateway switch $v_i$ in the source domain, the source controller $C_s$ computes the encrypted distance from $v_s$ to $v_i$ plus the distance from $v_i$ to a gateway $w$ in the destination domain on the shared tree rooted at $v_i$. Thus for any $w$, there are $k$ potential paths from $v_s$ to $w$. Suppose the destination domain has $k'$ gateways. Then $C_s$ simply sends all $k\cdot k'$ path lengths, with partial encryption, to the destination controller $C_t$. $C_t$ can determine the shortest path from $v_s$ to $v_t$ and install forwarding entries using the method similar to that in Section \ref{sec:path}.

For example in Figure \ref{fig:ECG}, both $D_s$ and $D_t$ have two gateways. Hence for a group of flows from $D_s$, $D_s$ can maintains two PSPTs rooted at $v_1$ and $v_2$. There are at most $2\times2=4$ shortest paths between $D_s$ and $D_t$, and $D_s$ can select one of them for each flow.
Due to space limit, we do not present further details of path selection and forwarding entry installation in other domains.

\subsection{PYCRO with Candidate Recommendation}
\label{sec:PYCRO-CR}

The complexity of the shortest path tree algorithm presented in Section \ref{sec:tree} is mainly due to the number of calls of Secure-If operations to select the smallest $\alpha(vv')$ among the
$\alpha$ values of all links in the equivalent cost graph and the inefficiency of the secure comparison operation. To reduce the number of calls of Secure-If operations, we propose to use candidate recommendation to let the other domain recommend potential nodes that may have the smallest $\alpha$ value(i.e.,the smallest $\alpha$ value in its domain). As for the inefficiency of the secure comparison operation, we replace it with the Damgard-Geisler-Kroigard (DGK) secure comparison protocol, a more efficient protocol proposed in \cite{DamgardSG}.{\footnote{Using DGK, we make a small sacrifice of privacy for efficiency. However, it's worth since only a little information is revealed.}} Unlike the secure comparison we used in Section~\ref{sec:tree}, the input and output of the DGK protocol are plaintexts. Suppose there are two parties $A$ and $B$. $A$ has a number $a$ and $B$ has a number $b$. $A$ and $B$ can run the DGK protocol to compare $a$ and $b$ without revealing $a$($b$) to party $B$($A$).

After constructing the equivalent cost graph and adding the source node $v_s$ into the shortest path tree with $g(v_s)=E(0)$. The source domain controller $C_s$ broadcasts $v_s$ and $g(v_s)$ to all other domain controllers.
Then, the domains repeat the three interactive steps below for $|S|-1$ times:

Step 1.
Each domain $D_i$ finds its significant node that is not in the shortest path tree and the path length to the root is the shortest in $D_i$. $D_i$ also records the node's parent and its path length. We call the node selected by $D_i$ a \emph{candidate node} $v_i$.
Besides, a domain controller $C_0$ (specified by the source controller $C_s$) sends the information $g(v_0)$ and $h(v_0)$  of its candidate node $v_0$ to $C_s$.

Step 2. The source controller $C_s$ should then find out the candidate node whose path length to $v_s$ is the shortest.  $C_s$ temporarily sets $v_0$ as the shortest-distance node $u\leftarrow v_0$.
For each candidate node $v_i$ except $v_0$:
Controller $C_s$ sends a message including $g(u)$  to $v_i$'s controller $C_i$. $C_i$ then runs DGK secure comparison protocol to compare $g(u)$ and the path length of candidate node $v_i$. Once the DGK protocol finishes, $C_i$ tells $C_s$ the result of the comparison.
According to the result, if the plaintext of $g(u)$  is less than that of $g(v_i)$, $C_i$ then updates $u\leftarrow v_i$.

Step 3. After the two steps above, the controller $C_s$ get the shortest-distance node $u$. Next, $C_s$ requests the controller of $u$'s domain for the information of $g(u)$ and $h(u)$ and  add $u$ into the shortest path tree under its parent. $C_s$ broadcasts the new shortest path tree with encrypted distance information to the other domains.

After $|S|-1$ iterations of the above loop, $C_s$ finishes the computing of the shortest path tree.



\section{Bandwidth Allocation}
\label{sec:ba}
Bandwidth allocation has been applied to practical traffic engineering solutions such as B4 \cite{B4}. We solves a relatively simple version of the bandwidth allocation problem. Before we define the problem, we introduce some preliminaries.

\iftrue
Besides the link cost, every link $vv'$ also has a bandwidth $b(v,v')$. $b(v,v')$ represents the maximum bandwidth that link $vv'$ can provide.  And the definition of the cost of a flow on a path is:

\begin{define}
\label{def:1}
Given a path $p$ from node $v$ to node $v'$ whose length is $l_p$, if a flow $f$ consumes bandwidth $b_f$ on $p$, then the cost of $f$ on $p$ is  is $c(f, p) = b_f \cdot l_p$.
\end{define}

A flow $f$ has a bandwidth demand $q_f$. However as link bandwidth is limited, it may need multiple paths to satisfy a flow's bandwidth demand \cite{B4}. We assume a flow can be split to multiple subflows to be transmitted on different paths.
And the cost of $f$ is defined as:

\begin{define}
\label{def:2}
The cost of  $f$ for bandwidth allocation is the sum of path cost $\Sigma b_f \cdot l_P$ for $p \in P$ where $P$ is the set of paths that $f$ is split on.
\end{define}

\fi

Given Definition~\ref{def:1} and Definition~\ref{def:2},we define the \emph{Bandwidth Allocation} problem as follows:
\begin{define}
\label{def:ba}
\emph{Bandwidth Allocation}:for any flow $f$ with bandwidth demand $q_f$, we should find $k$ paths such that the sum of allocated bandwidth of these paths to $f$ is no less than the bandwidth demand $q_f$ and the routing cost of $f$ should  be minimized.
\end{define}

We design a bandwidth allocation protocol of PYCRO, named PYCRO-BA,  works in the following steps:

Step 1. During the construction of the equivalent cost graph, each domain controller assigns an available bandwidth  $b(v, v')$ amount between two significant nodes $v$ and $v'$, which is also encrypted by a homomorphic encryption system.

Step 2. The source controller creates the shortest path tree and finds the shortest path $p$ from the source $v_s$ to destination $v_t$ using the protocol presented earlier.

Step 3. The source controller determines the available bandwidth $b_p$ on the shortest path, which is the minimum value of $b(v, v')$ for all links $(v, v')$ on the paths. This process is similar to the previous protocol to determine the minimum cost candidate. We skip the protocol details here and have implemented them in the experiments.

Step 4. If $b_p$ is smaller than the bandwidth demand $q$, $C_s$ computes a residual demand $q- b_p$ and find another path to satisfy the demand.

Step 5. $C_s$ deletes all links of $p$ from the equivalent cost graph, and repeats Steps 2-4 to find more paths until the bandwidth demand is satisfied.

The above bandwidth allocation protocol requires multiple calls of the shortest path tree protocol. To improve its efficiency, $C_s$ may find multiple disjoint paths to different gateways of the destination domain and suggest these paths to the destination controller $C_t$. If $C_t$ can also find multiple disjoint paths from different gateways to $v_t$, multiple paths can be established by a single call of the shortest path tree protocol. We plan to develop more sophisticated protocol to optimize this process in future work.

\section{Privacy analysis of PYCRO}
\label{sec:analysis}

We analyze the privacy-preserving property of PYCRO in a standard cryptographic model,
the semihonest model \cite{Goldreich}, which is widely used in the literature (e.g., \cite{ZhongSDM} and \cite{PPGraph}). In
this model, all involved parties are assumed to follow the protocol faithfully, although
they may attempt to violate privacy using the information they obtain. Note that such
an assumption is acceptable in our scenario of cross-domain routing, because domain
controllers usually have long-term relationship with each other. Despite their curiosity
about others' private information, it is uncommon for them to deviate from the protocol
just in order to violate others' privacy.

The main result  we get£¬ as shown in Proposition 4 below, is that PYCRO only leaks to
each domain controller its significant nodes' distances from the source node and parents
nodes in the shortest path tree. We stress that this leaked distance information is
about a small number of pairs of nodes only. Any other information, including distances
between other pairs of nodes, are protected by PYCRO. Furthermore, our protection is
cryptographically strong, i.e., no partial knowledge about the protected information
is leaked by PYCRO. In contrast, the performance cost we pay for the privacy protection
is very reasonable. The execution time varies among different topologies, from seconds
to tens of seconds (please see Section \ref{sec:evaluation} for details).

\begin{proposition}
PYCRO is weakly privacy preserving in the semihonest model, in the sense that
it reveals to each domain controller no more than its significant nodes' distances from
the source node and parent nodes in the shortest path tree.
\end{proposition}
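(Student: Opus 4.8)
The plan is to give a simulation-based proof in the real/ideal paradigm: for each domain controller $P$, I would exhibit a probabilistic polynomial-time simulator $\mathrm{Sim}_P$ that, given only $P$'s own input together with $P$'s prescribed output---for a non-source controller $C_i$, the distances from $v_S$ and the parent nodes (i.e.\ the cleartexts of $g(v)$ and $D'(h(v))$) of the significant nodes in $D_i$; for $C_S$, the same data for its own significant nodes plus the public topology of the equivalent cost graph---produces a transcript computationally indistinguishable from $P$'s real view. Because the secret sharing is $(N,2)$, any two controllers jointly hold the key, so I would prove the claim for each party in isolation, which is precisely the \emph{weak} guarantee asserted. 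The first step is to catalog $P$'s view and split every received message into two classes: (i) \emph{ciphertexts} of $E()$ or $E'()$---the $e(vv')$, every rerandomized value returned from a Secure-If or comparison, and every partially decrypted ciphertext $P$ merely forwards---and (ii) \emph{fully decrypted scalars}, namely the cleartext of $t_0$ that $P$ obtains on the occasions when it plays the decryption helper $C_1$ in Section~\ref{sec:if}. Note that $C_S$ itself only ever performs partial decryptions, so its view is entirely of class~(i).

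For class~(i) I would run a standard hybrid argument, replacing the received ciphertexts one at a time by encryptions of a fixed dummy message; indistinguishability of consecutive hybrids reduces directly to the semantic security of $E()$ and $E'()$. The rerandomization property is what makes this sound in the interactive setting: since a rerandomized ciphertext is distributed as a fresh encryption and is unlinkable to its preimage, $\mathrm{Sim}_P$ can manufacture the messages returned to $C_S$ without knowing which branch of a Secure-If was actually selected, and the partial decryptions carry no cleartext information by assumption.

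For class~(ii) I would exploit the probability-$\tfrac12$ branch swapping built into each Secure-If and each comparison. The key computation is that, over $C_S$'s internal coin and the random exponent $r$, the cleartext of $t_0$ has \emph{the same} distribution whether or not the hidden condition holds: it equals the group identity with probability $\tfrac12$ and is otherwise uniform on the relevant cyclic subgroup, and likewise the decrypted sign ($\pm\theta$, $\pm\epsilon$) in the comparison layers is a uniform bit independent of the true ordering. Hence $\mathrm{Sim}_P$ can sample each such scalar from this single fixed distribution, and splicing $P$'s allowed output into the final partial-decryption step completes the simulated view.

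The main obstacle will be the \emph{joint} analysis of class~(ii): a curious helper sees the decrypted scalars from all $|S|-1$ iterations and from every edge, whose underlying conditions are heavily correlated because the same $f(v),g(v),h(v)$ are reused across edges and rounds. I must show that the fresh, independent coins and exponents drawn by $C_S$ for each invocation render the entire vector of revealed scalars mutually independent and independent of the secret inputs, so that no residual correlation survives to be exploited. Once this independence is established, combining it with the class-(i) hybrids yields indistinguishability of the full view, and I conclude that each $P$ learns nothing beyond its significant nodes' distances from $v_S$ and their parent nodes in the shortest path tree.
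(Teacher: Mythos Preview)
Your proposal is correct and follows essentially the same approach as the paper: a per-party simulator that replaces all received ciphertexts by fresh encryptions via a semantic-security hybrid, and that samples each decrypted $t_0$ from the fixed distribution induced by $C_S$'s probability-$\tfrac12$ branch swap (identity versus uniform in Step~1, $\{2,2^{-1}\}$ uniform in the inner layer, $\pm1$ uniform in the outer layer). Your class~(i)/class~(ii) decomposition and your explicit independence argument for the vector of revealed scalars are exactly the content of the paper's sketch, just organized more cleanly; the paper's device of encrypting simulated messages under ``$C_1$'s own public key'' is its way of saying what you phrase as ``partial decryptions carry no cleartext information,'' so be sure to invoke the threshold-decryption security assumption explicitly at that point.
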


The basic idea of our proof is to demonstrate a probabilistic polynomial-time simulator according to the definition and proof methodologies of cryptographic protocols discussed in \cite{Goldreich}.

\begin{proof1} Due to limit of space, we only provide a proof sketch. Some details are skipped.

Our proof is established by demonstrating a probabilistic polynomial-time simulator according to the definition and proof methodologies of cryptographic protocols discussed in \cite{Goldreich}.

For each domain controller $C_i$, we construct a simulator for its view, which takes
as input  its significant nodes' distances from
the source node and parent nodes in the shortest path tree. All coin flips in the view
can be easily simulated, and thus we focus on generating simulated messages below.

If $C_i \neq C_s$, the simulator simulates the messages received from $C_s$ for
each of its significant node, using two  ciphertexts.
The first ciphertext is an encrypted distance of the significant node from the source node,
where the cryptosystem used is $E()$ and the key used is $C_i$'s own public key.
The second ciphertext is an encrypted identity of the significant node's parent node in
the shortest path tree, where the cryptosystem used is $E'()$ and the key used is still
$C_i$'s public key.

For $C_1$, we add the following simulated messages. In the Secure-If operation $SecIf_0$, the messages from $C_s$ is simulated using three ciphertexts. The first of
these three is under $E'()$, with the plaintext being $1$ with probability $\frac{1}{2}$, or
a uniformly random number with probability $\frac{1}{2}$. The remaining two
are encryptions of random plaintexts under $E()$. The public key used for encryption
of all these three is $C_1$'s own public key.

For the Secure-If operation $SecIf_1$ and $SecIf_2$, the simulator goes as follows. For $SecIf_2$,
the messages from $C_s$ are simulated using $8$ random ciphertexts under
$E'()$ and $4$ random ciphertexts under $E()$, and also another ciphertext under $E'()$
with the plaintext being $2$ or $\frac{1}{2}$, each with probability $\frac{1}{2}$, where
the public key used for encryption is $C_1$'s own public key.  For $SecIf_1$,
in addition to simulating the received messages in the executions of secure comparison,
the simulator simulates the earlier round of message from $C_s$ using three ciphertexts
under $E()$, with the first being an encrypted $1$ or encrypted $-1$, each
with probability $\frac{1}{2}$, where
the public key used for encryption is $C_1$'s own public key. The remaining two ciphertexts
are randomly generated.
The simulator simulates the later round of message from $C_s$ using $8$ random ciphertexts
under $E'()$ and $4$ random ciphertexts under $E()$,  and also another ciphertext
under $E()$ being an encrypted $1$ or encrypted $-1$, each
with probability $\frac{1}{2}$, where
the public key used for encryption is $C_1$'s own public key.

For $C_s$, the simulator goes as follows. First, it simulates the first round messages
from other domain controllers using random ciphertexts.  For each pair of significant
nodes in any other domain, there should be a random ciphertext under the cryptosystem
$E()$. Then the simulator proceeds to simulate the message received from $C_1$ in the
Secure-If operation $SecIf_0$. This should again be a random ciphertext under the cryptosystem
$E()$.

The Secure-If operation $SecIf_1$ and $SecIf_2$ are more complicated. For $SecIf_2$, the messages
from $C_1$ can be simulated by using $4$ random ciphertexts under cryptosystem $E'()$
and $2$ random ciphertexts under cryptosystem $E()$. For $SecIf_1$,  in addition
to simulating the received messages in the executions of secure comparison,
the simulator simulates the earlier message from $C_1$ using a random ciphertext, being
$E(1)$ with probability $\frac{1}{2}$ and $E(-1)$ with probability $\frac{1}{2}$. The
final messages from $C_1$ are simulated using $4$ random ciphertexts under $E'()$
and $2$ random ciphertexts under $E()$.
\end{proof1}

\vspace{-4ex}
\section{Performance Evaluation}
\label{sec:evaluation}
The most significant concern of a privacy-preserving protocol is its computation and communication efficiency.
In this section, we conduct experiments to evaluate the efficiency of PYCRO protocols. We have implemented a prototype system on seven Dell PowerEdge R720 servers with Linux operation systems. All servers are connected via a campus network. Each machine runs a program to emulate a controller. If the controller number is larger than seven, we may run multiple threads on a single machine. We configure the controller placement such that two neighboring controllers are in different machines.
In all experiments, cryptographical operations are implemented using the Crypto++ library \cite{Dai}.

We use the router-level topologies of seven real ISP networks  collected by the Rocketfuel project \cite{Rocketfuel}.  The detailed information of the seven  networks can be found in Table~\ref{tbl:domaininfo} and networks are identified as I to VII. Based on topology analysis, we set a number of routers as gateways. Based on the seven networks, we construct 30 multi-domain topologies in six groups as shown in Table~\ref{tbl:testinfo}. For example, topologies 1 to 5  are constructed using the same domains I and II, but have different number of gateways and inter-domain links in an increasing order. Gateways are randomly selected from the gateways routers of the Rocketfuel networks. 

\begin{center}
\begin{table}[tp]
\renewcommand{\arraystretch}{1.3}
\caption{Information of the seven Rocketfuel topologies}
\label{tbl:domaininfo}
\centering
\scalebox{0.8}{
\begin{tabular}{c|c|ccc}
\hline
Network ID & Network name & \# routers & \# links & \# gateways \\
\hline
I & AS 1221 & $318$ & $758$ & $231$ \\
\hline
II & AS 1239 & $604$ & $2268$ & $242$ \\
\hline
III & AS 1755 & $172$ & $381$ & $61$ \\
\hline
IV & AS 2914 & $960$ & $2821$ & $507$ \\
\hline
V &AS 3257 & $240$ & $404$ & $89$ \\
\hline
VI &AS 3967 & $201$ & $434$ & $110$ \\
\hline
VII & AS 7018 & $631$ & $2078$ & $246$ \\
\hline
\end{tabular}}
\end{table}
\end{center}

\begin{center}
\begin{table}[tp]
\centering
\caption{Information of multi-domain topologies.}
\label{tbl:testinfo}
\scalebox{0.8}{\begin{tabular}{c|cccc}
\hline
Topo ID & \# domains & domains & \# inter-d links & \# gateways \\
\hline
$1-5$ & 2 & I,II & $10 - 100$ & $21-165$ \\
$6-10$ & 3 & I to III & $10-100$ & $21-158$ \\
$11-15$ & 4 & IV to VII & $10-100$ & $21-177$ \\
$16-20$ & 5 & I,III,V to VII & $10-100$ & $21-174$ \\
$21-25$ & 6 & I to VI & $10-100$ & $21-177$ \\
$26-30$ & 7 &I to VII & $10-100$ & $21-185$ \\
\hline
\end{tabular}}
\end{table}
\end{center}

\begin{figure*}[tp]
\centering
\begin{tabular}{p{120pt}p{120pt}p{120pt}p{120pt}}
\includegraphics[width=4.2cm]{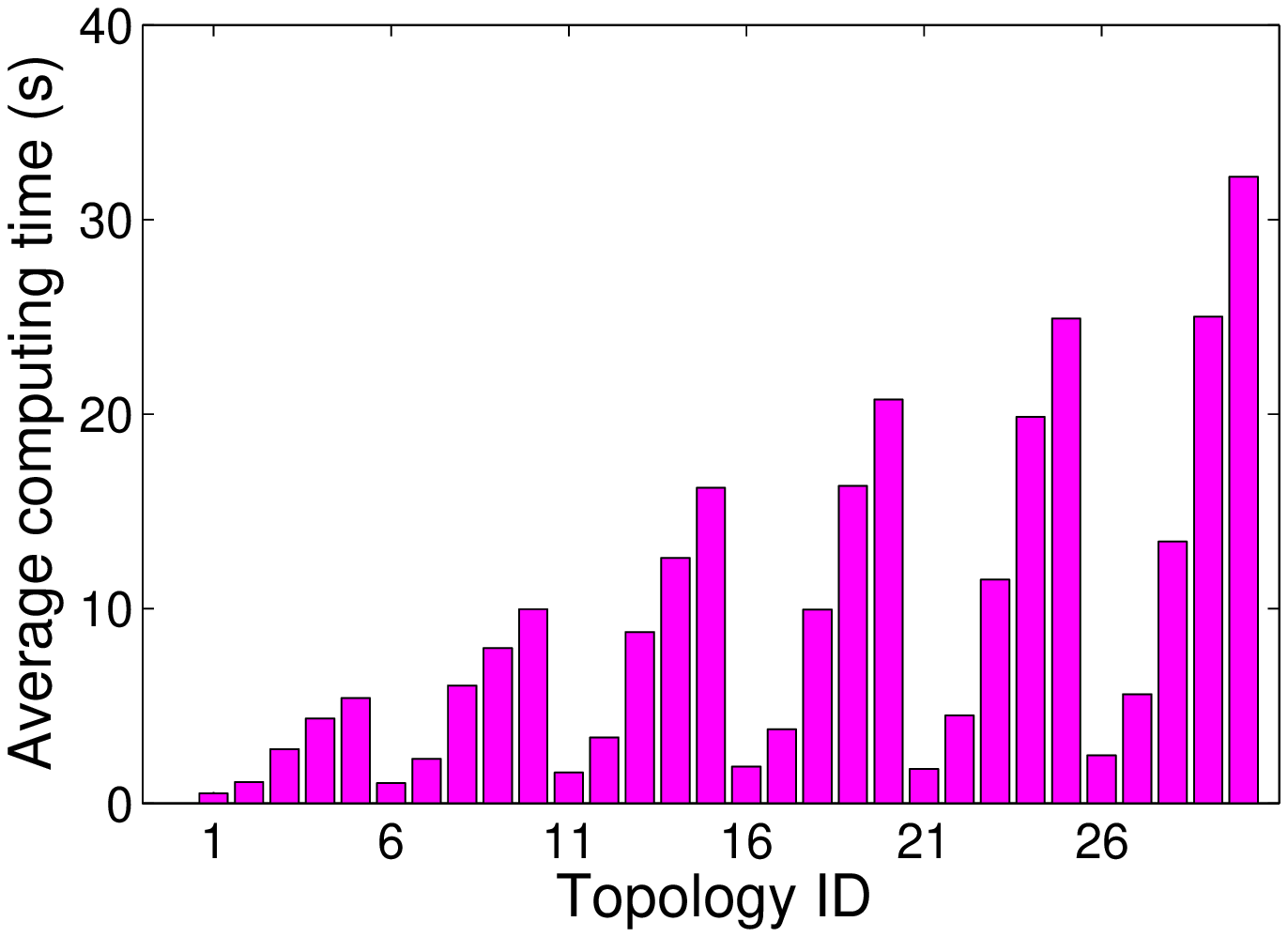}
\caption{Average  execution time  of PYCRO}
\label{fig:result 1}
&
\includegraphics[width=4.2cm]{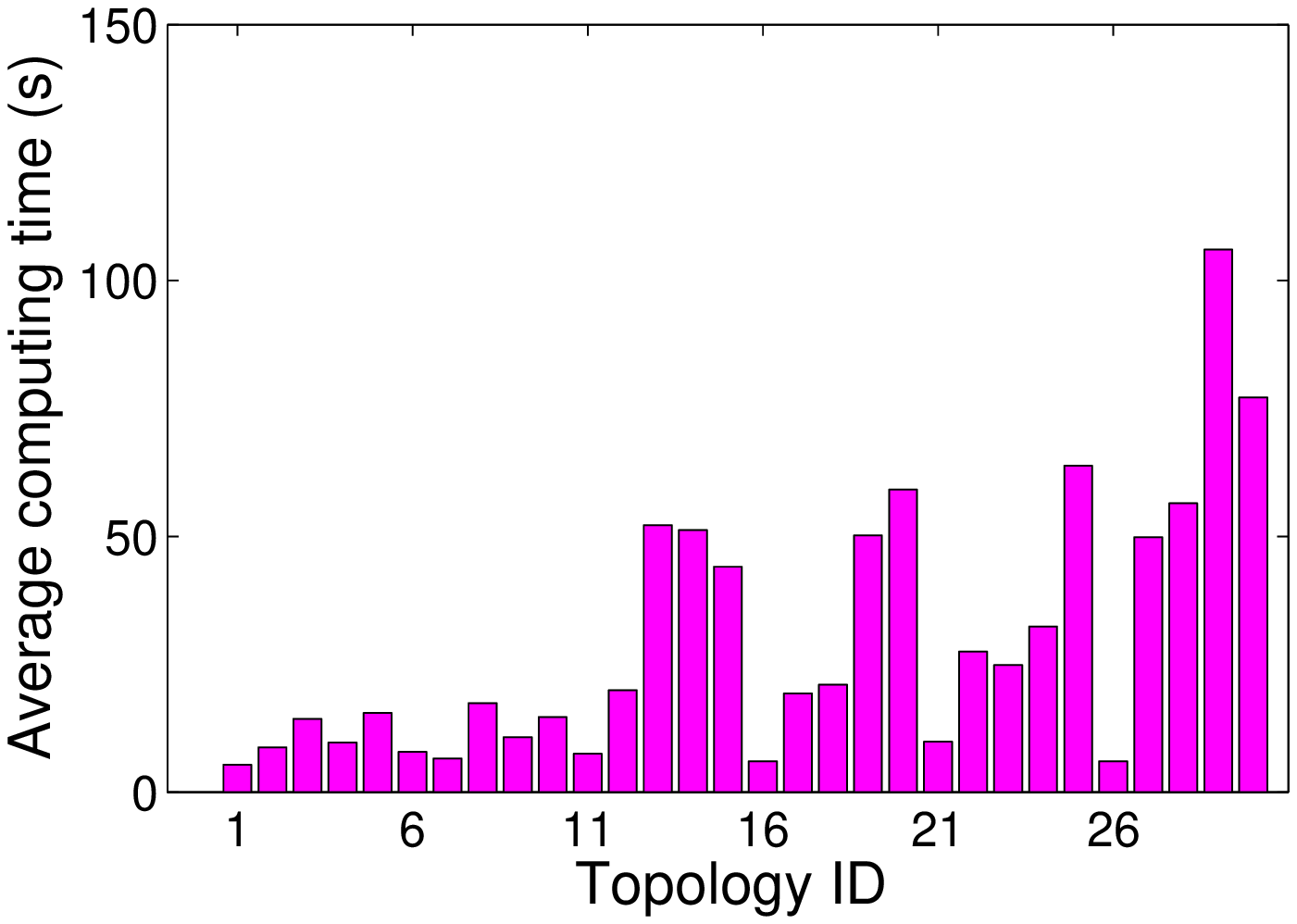}
\caption{Average  execution time  of PYCRO bandwidth allocation}
\label{fig:result 2}
&
\includegraphics[width=4.2cm]{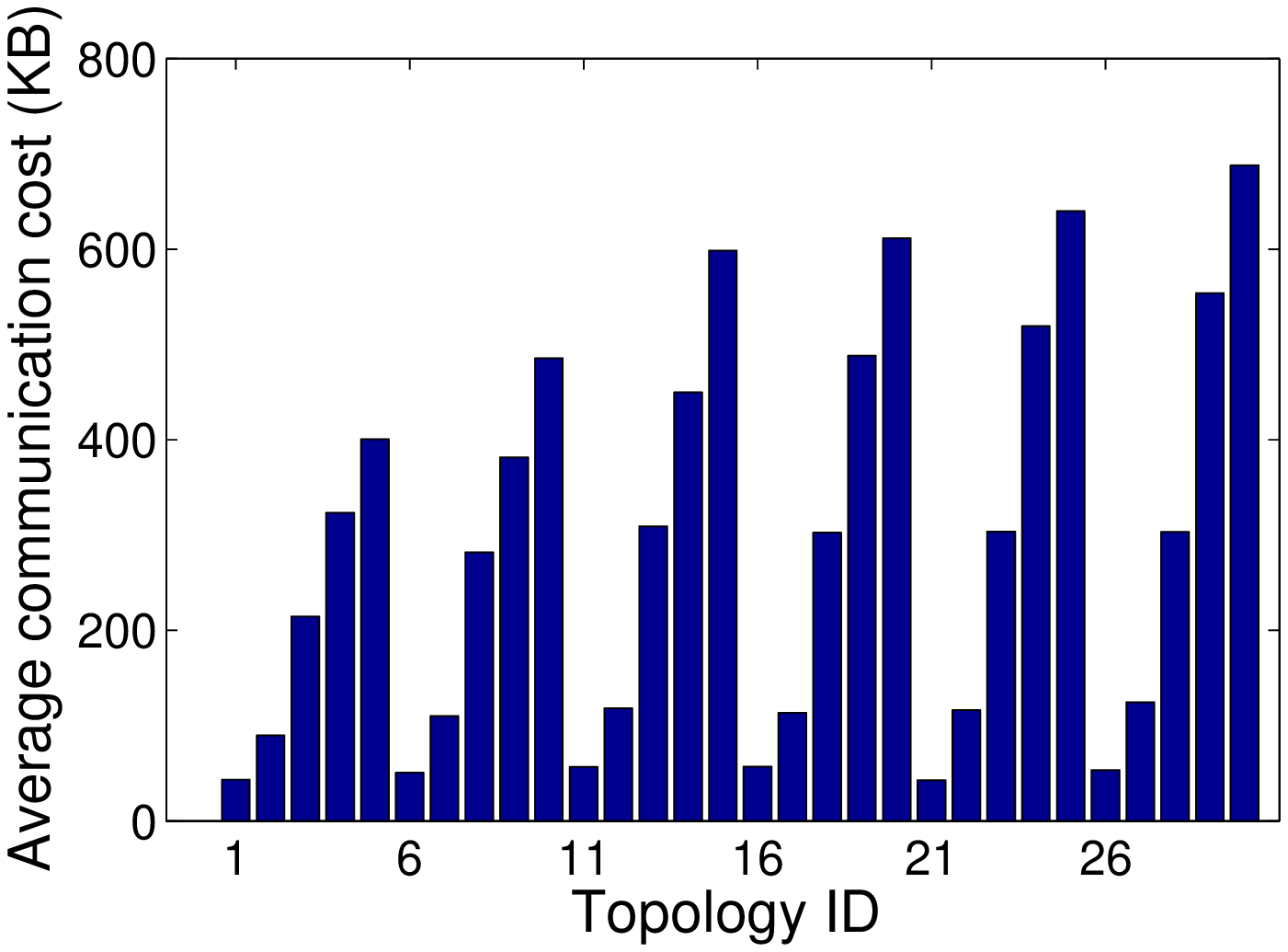}
\caption{Communication cost of PYCRO}
\label{fig:result 1_comm}&
\includegraphics[width=4.2cm]{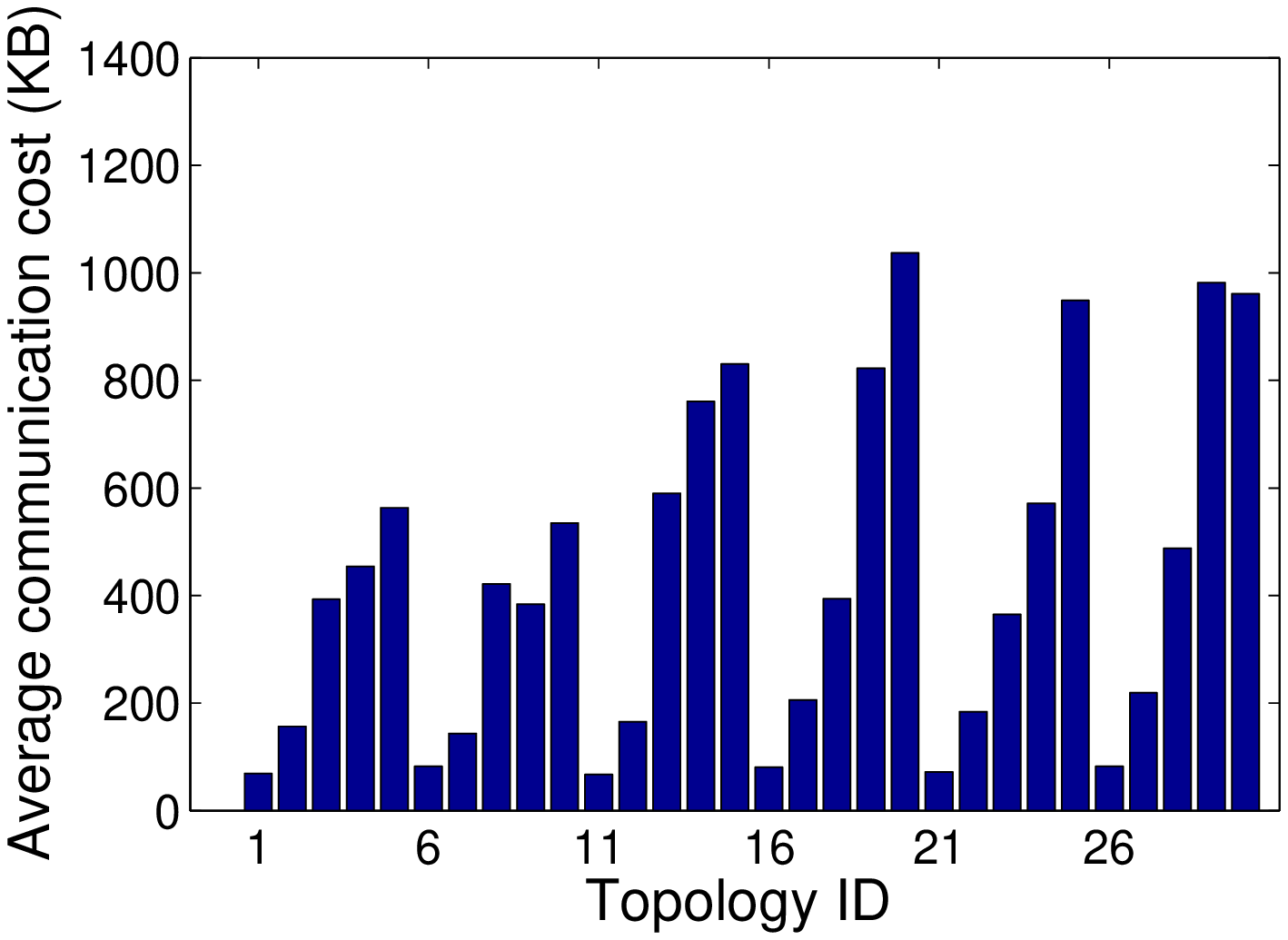}
\caption{Communication cost  of PYCRO bandwidth allocation}
\label{fig:result 2_comm}
\end{tabular}
\vspace{-3ex}
\end{figure*}

\vspace{-11ex}
\textbf{Computation cost.}
We first conduct experiments to construct shortest path trees on every topology. For each topology, we randomly select 20 nodes and construct a shortest path tree for each of them.
By computing time, we mean the average execution time of the protocol for one shortest path tree.
We find that the computing times for different nodes in a same topology vary very little.
It is because the execution time mainly depends on the number of domains, number of inter-domain links, and number of gateways.
Figure \ref{fig:result 1} shows the average execution time  of PYCRO on different topologies. The deviations are too small to be shown in the figure. We find that, for topologies consisting of the same domains (e.g., topologies 1-5), the  execution time increases linearly with the number of inter-domain links and number of gateways. By comparing topologies of different domains, the execution time also increases linearly with the number of domains. In general PYCRO is very efficient: it takes a short time to compute a shortest path tree on a topology with thousands of switches and links in a privacy-preserving manner.
Since a shortest path tree can be shared with multiple paths and the response to a path query takes much less time. Specially, if we have got a shortest path tree rooted at $v_s$, the paths that start from $v_s$ to any destination can be constructed easily and quickly using the Algorithm~\ref{alg:path_establishment}.



We then conduct experiments to evaluate the  execution time of the bandwidth allocation protocol. We assign every link a random capacity from $1$ to $5$. In each experiment, we set the bandwidth demand as 20 and find multiple paths between the sender and destination to satisfy the bandwidth demand. This bandwidth demand can be considered as the aggregated demand of all flows in the sender switch. For each topology we perform 20 runs and compute the average. The results are shown in Figure \ref{fig:result 2}. We find that there is no strict linear dependency of the execution time and number of inter-domain links, because more inter-domain links also make it easier to find multiple disjoint paths at a shortest path tree.

\textbf{Communication cost.}
We then show the communication cost of PYCRO in the average size of all messages per domain and plot the results in Figures \ref{fig:result 1_comm} and \ref{fig:result 2_comm}. We observe that the communication cost also increases with  the number of domains, number of inter-domain links, and number of gateways. Each domain spends less than 700 KB  to compute a shortest path tree  and less than 1 MB to allocate bandwidth for the largest topology. For other topologies the communication cost is much less. In general, PYCRO is communication efficient.

\textbf{Comparison with other solutions.}
It is hard to find an existing work achieving the same objectives as PYCRO. It is non-trivial to apply existing secure multi-party computation such as Fairplay~\cite{fairplay2004}
and SEPIA~\cite{sepia2010} to the problems of this paper.

A cross-domain privacy-preserving protocol for quantifying network reachability is proposed in~\cite{cross-reachability}. From their experimental results, we find that about $400$ or $550$ seconds offline computation cost, about $5$ or $25$ seconds online computation cost and about $450$ or $2100$ KB communication cost are needed for every party on average in their synthetic data.
In our experiments of optimized protocol of PYCRO, even the biggest network requires only $32.3/7=4.61$ seconds and $687.78/7=98.25$ KB for each domain in average.
In \cite{fairplay2004}, a full-fledged system called Fairplay that implements generic secure function evaluation is introduced.
Their experimental results show that it takes $1.41$ second to make a comparison.
In our optimized protocol, $(|S|-1)(n-1)$  comparison operations are needed in total,
where $|S|$ is the significant node number (from tens to hundreds) and $n$ is the domain number(from $2$ to $7$).
Hence, if we apply Fairplay to our protocol,
the average comparison operation time of each domain is $1.41(|S|-1)(n-1)$ seconds.
For a case $|S|= 185 $ and $n=7$, the average comparison time of each domain is $222.38$ seconds while the average time that PYCRO consumes in each domain is $4.61$ seconds. 

In summary, PYCRO can improve the time and bandwidth efficiency by an order of magnitude for cross-domain routing optimization, compared to existing solutions.

\vspace{-1ex}
\section{Conclusion}
\label{sec:conclusion}
In this paper we present PYCRO, the first privacy-preserving cross-domain routing optimization protocol in SDN environments. We develop a new cryptographic tool named the Secure-If operation and apply it with homomorphic encryption to compute the shortest cross-domain paths without revealing private information. PYCRO also provides bandwidth allocation, a fundamental traffic engineering solution. We have implemented PYCRO in a prototype system and  performed real experiments to demonstrate its efficiency. Experimental results show that PYCRO can improve the time and bandwidth efficiency by an order of magnitude compared to general-purpose solutions. In future we will design more complex routing optimization functions based on PYCRO.  We believe our study may lead to useful discussion of the same problem for the Internet.

\end{document}